\newcommand*{\EnQuR}[2]%
{\ensuremath{%
    #1/\!\raisebox{-.65ex}{\ensuremath{\mathcal{#2}}}}}
\newcommand{\NP}{\ensuremath{\textsf{NP}}}
\newcommand{\NPcNP}{\ensuremath{\textsf{NP} \cap \textsf{coNP}}}
\newcommand{\PT}{\ensuremath{\textsf{P}}}
\newcommand{\FP}{\ensuremath{\textsf{FP}}}
\newcommand{\TFNP}{\ensuremath{\textsf{TFNP}}}
\newcommand{\QBF}{\ensuremath{\textrm{-QBF}}}
\newcommand{\SEN}{\mathrm{SEN}}
\newcommand{\SAT}{\mathrm{SAT}}
\newcommand{\MOD}{\mathrm{MOD}}
\newcommand{\STRUC}{\mathrm{STRUC}}
\newcommand{\GRAPH}{\mathrm{GRAPH}}
\newcommand{\Can}{\mathrm{Can}}
\title{Relativization of Gurevich's Conjectures}
\author{Anatole Dahan\inst{1} \and Anuj Dawar\inst{2}}
\institute{ENS Paris \and University of Cambridge 
}
\begin{document}

\maketitle

\begin{abstract}
Gurevich (1988) conjectured that there is no logic for $\PT$ or for
$\NPcNP$.  For the latter complexity class, he also showed that the
existence of a logic would imply that $\NPcNP$ has a complete problem
under polynomial time reductions.  We show that there is an oracle
with respect to which $\PT$ does have a logic and $\PT \neq \NP$.  We
also show that a logic for  $\NPcNP$ follows from the existence of a
complete problem and a further assumption about canonical labelling.
For intersection classes $\Sigma_n^p \cap \Pi_n^p$ higher in the
polynomial hierarchy, the existence of a logic is equivalent to the
existence of complete problems.
\end{abstract}

\section{Introduction}\label{sec:intro}
In a highly influential paper published in 1988~\cite{gurevich1}, Yuri
Gurevich put forth the conjecture that there is no logic that captures
polynomial time computation.  The question of whether there is a logic
for $\PT$ has been a major driver of research in finite model theory
and descriptive complexity in the last thirty years.  In this line of
work, the exact formulation of the question given by Gurevich has
played a central role. Roughly speaking (a precise definition is
given later), the question is whether there is a recursive set $S$ of
\emph{polynomially-clocked} deterministic Turing machines each of
which decides an isomorphism-closed class of structures and such that
for every such class in $\PT$, there is a machine in $S$ witnessing
this fact.

Gurevich's conjecture that there is no logic for $\PT$ implies that
$\PT$ is different from $\NP$.  This is not, as is often assumed, a
simple consequence of Fagin's result~\cite{Fagin} that there is a
logic for $\NP$, i.e.\ existential second-order logic.  Indeed,
knowing Fagin's theorem and assuming $\PT=\NP$ does not immediately
yield a \emph{computable} translation from sentences of existential
second-order logic to \emph{deterministic} polynomially-clocked
machines.  The argument requires a little bit more work.  There is,
however, another argument that takes us from $\PT=\NP$ to a refutation
of Gurevich's conjecture.  This relies on the fact that  $\PT=\NP$ would imply the
collapse of the polynomial hierarchy and, in particular, that there is a
polynomial-time algorithm for producing a canonical labelling of a
graph (see~\cite{BlassGurevich-equivalence}).  A polynomial-time
algorithm for canonical labelling of graphs yields a logic for $\PT$
(see~\cite[Proposition~1.7]{Dawar-thesis}).  Indeed, much of the
research around the existence of logics for $\PT$ has been concerned
with the existence of canonical labelling algorithms on suitable
classes of structures.

Thus, while $\PT =\NP$ would imply the refutation of Gurevich's
conjecture, the converse of this statement is not known.  Indeed, it
is often said that it is entirely consistent with our knowledge that
$\PT$ is different from $\NP$ but there is a logic for $\PT$.  The
second author of the present paper made this statement in a lecture in
2012 and was challenged from the audience to provide evidence for it.
Theodore Slaman asked if there is a relativized world in which $\PT$
is different from $\NP$ but there is a logic for $\PT$.  In
Section~\ref{sec:relativization} we show that this is, indeed, the case.  That is we give
a construction of an oracle $A$ such that there is a logic for
$\PT^A$, but $\PT^A \neq \NP^A$.  This should be contrasted with the
result shown in~\cite{Dawar} that if $\PT = \NP$ (in the unrelativized
sense), then there is a
logic for $\PT^A$ for all sets $A$.

Gurevich also conjectured in~\cite{gurevich1} that there is no logic
for the complexity class $\NPcNP$.  Relativizations of this
conjecture were considered in~\cite{Dawar} (published on the occasion
of Yuri's 70th birthday) where it was shown that
this conjecture is subject to the relativization barrier, in the sense
that there are relativized worlds in which it is true and also
relativized worlds in which it is false.  The construction of an
oracle for which $\NPcNP$ does not have a logic is based on
known constructions of oracles for which $\NPcNP$ does not
admit complete problems under polynomial-time reductions (see~\cite{Sip82}), and the fact
that a logic for $\NPcNP$ would imply the existence of complete
problems even under first-order reductions.  This last statement is a
theorem stated in~\cite[Theorem~4]{Dawar} though the proof was omitted as it is
similar to the well-known proof of the corresponding statement for
$\PT$~\cite{Dawar-jlc}.  In Section~\ref{sec:intersection}, we give a proof of this
fact as a special case of a more general result about $\Delta$-levels
of the polynomial hierarchy.  We are able to show, in Section~\ref{sec:reductions}, for all levels above
the first that the existence of complete problems under
polynomial-reductions is \emph{equivalent} to the existence of
complete problems under first-order reductions.

\section{Preliminaries}\label{sec:prelim}
We work with finite relational signatures.  We write $\sigma$ for an
arbitrary such signature.  All our structures are finite, so a
$\sigma$-structure is a finite set $A$ along with an interpretation on
$A$ of every relation symbol in $\sigma$.  We write
$\STRUC[\sigma]$ to denote the collection of all finite
$\sigma$-structures.  We do not consider any specific signatures
except that of graphs, i.e.\ where $\sigma$ consists of the single
binary relation $E$.  We refer to this signature as $\GRAPH$.  We
assume a standard encoding of finite relational structures as strings,
as given in~\cite{gurevich1}.  We write $|S|$ for the size (i.e.\
number of elements) of a structure $S$, which is related by a
polynomial factor to the length of the string encoding $S$.  As these
polynomial factors are unimportant for our discussion, we do not
distinguish between $S$ and the string encoding it.  Note that,
strictly speaking, an encoding of $S$ depends on $S$ \emph{and} a
choice of order on the universe of $S$.  Where this is significant, we
mention the order explicitly.  For full background material on finite
model theory, the reader is referred to~\cite{EbbinghausFlum}.

We begin by stating the definition of a logic
given by Gurevich~\cite{gurevich1}

\begin{definition}[Logic]
A logic $\mathcal L$ is a pair $(\SEN,\SAT)$ of functions, taking a
signature $\sigma$ as parameter, such that
\begin{itemize}
\item $\SEN(\sigma)$ is a recursive set. We call $\varphi\in \SEN(\sigma)$ an $\mathcal L$-sentence on $\sigma$.
\item $\SAT(\sigma)$ is a recursive subset of $\{(S,\varphi) \mid
  \varphi\in \SEN(\sigma), S \in \STRUC[\sigma] \}$, such that for two isomorphic structures $S$ and $S'$ \[\forall \varphi\in \SEN(\sigma), (S,\varphi)\in \SAT(\sigma) \iff (S',\varphi)\in \SAT(\sigma)\]
\end{itemize}
If $\varphi$ is an $\mathcal L$-sentence on $\sigma$, we write
$\MOD[\varphi]$ to mean $\{S \mid (S,\varphi)\in \SAT(\sigma)\}$.
\end{definition}

Next, we reproduce Gurevich's definition of a logic capturing
polyonomial time. 
\begin{definition}\label{def:Pcapture}
A logic $\mathcal L$ captures $\PT$ if:
\begin{itemize}
\item there is a Turing machine $\mathcal C$ such that, on every input
  $\mathcal L$-sentence $\varphi$ of signature $\sigma$, $\mathcal C$
  outputs a pair $(M,p)$, where $M$ is a deterministic Turing machine
  and $p$ is a polynomial, such that for all $\sigma$-structures $S$,
  $S\in \MOD[\varphi]$ if, and only if, $M$ accepts $S$ within time
  $p(|S|)$; and
\item if $\mathcal P\subseteq \STRUC[\sigma]$ is an isomorphism-closed
  class of structures that belongs to $\PT$, then there exists an
  $\mathcal L$-sentence $\varphi$ of signature $\sigma$ such that
  $\MOD[\varphi] = \mathcal P$. 
\end{itemize}
\end{definition}

Definition~\ref{def:Pcapture} formalises the definition from the
opening paragraph of Section~\ref{sec:intro}.  It does not give a
general definition of capturing a logic for a complexity class, as it
crucially depends on the idea of membership of a class of structures
in $\PT$ being \emph{witnessed} by a pair $(M,p)$.  Different
complexity classes have rather different notions of witness.  In this
spirit, the following is Gurevich's definition of a logic capturing
$\NPcNP$.

\begin{definition}\label{def:NPcNPcapture}
A logic $\mathcal L$ captures $\NPcNP$ if :
\begin{itemize}
\item There is a Turing machine $\mathcal C$, such that, on every
  input $\mathcal L$-sentence $\varphi$ of signature $\sigma$,
  $\mathcal C$ outputs a triple $(M,N,p)$ where $M$ and $N$ are
  non-determinisitic Turing machines and $p$ is a polynomial such that :
\begin{itemize}
\item $\forall S\in \STRUC[\sigma], S\in \MOD[\varphi]$ if, and only
  if, there is a computation of $M$ of length at most $p(|S|)$ by which $M$ accepts $S$.
\item $\forall S\in \STRUC[\sigma], S\in \MOD[\varphi]$ if, and only
  if, all computations of $N$ on input $S$ of length at most $p(|S|)$
  lead to acceptance.
\end{itemize}
\item If $\mathcal P\subseteq \STRUC[\sigma]$ is an isomorphism-closed
  class of structures that belongs to $\NPcNP$, then there exists an
  $\mathcal L$-sentence $\varphi$ of signature $\sigma$ such that
  $\MOD[\varphi] = \mathcal P$. 
\end{itemize}
\end{definition}

Here the witness to membership in the class $\NPcNP$ is given by a
triple $(M,N,p)$.  It should be noted that in the case of
Definition~\ref{def:Pcapture}, the collection of witnesses $(M,p)$ is
a recursive set where we put a \emph{semantic}, undecidable condition
that the class of structures accepted by $(M,p)$ is isomorphism-closed.  In contrast, in the case of
Definition~\ref{def:NPcNPcapture}, we have \emph{two} separate
semantic conditions, namely that the two machines in the witness agree
on the class of structures accepted \emph{and} that this class is
isomorphism-closed.  As noted in~\cite{Dawar}, it is the first of
these conditions that means that $\NPcNP$ is not even known to have
complete problems under polynomial-time reductions and that Gurevich's
conjecture with regard to $\NPcNP$ is subject to the relativization
barrier.  

It was proved in~\cite{Dawar-jlc} that there is a logic for $\PT$ in
the sense of Defintion~\ref{def:Pcapture} if, and only if, there is a
problem in $\PT$ that is complete under first-order reductions.  A
similar statement for a logic for $\NPcNP$ was stated
in~\cite{Dawar}.  In the present paper, we prove this, and extend it
to higher levels of the polynomial hierarchy.  First, we introduce the
relevant definitions and notations in connection with the polynomial
hierarchy. 

For any set $A$, $\PT^A$ denotes the class of languages which are
accepted by some deterministic Turing machine with an oracle for $A$
in polynomial time.  Similarly $\NP^A$ denotes the class of languages
which are accepted by some nondeterministic Turing machine with an
oracle for $A$ in polynomial time.  The classes of the polynomial
hierarchy are defined as follows.
\begin{definition}\label{def:PH}
For all $n \geq 1$,
  \begin{itemize}
  \item   A language $L$ is in $\Sigma_1^p$ if, and only if, $L \in
    \NP$.
  \item  A language $L$ is in $\Sigma_{n+1}^p$ if,
    and only if, there is some $A \in \Sigma_n^p$ such that $L \in
    \NP^A$. 
  \item  A language $L$ is in $\Pi_n^p$ if, and only if, $\bar{L} \in
    \Sigma_n^p$.
  \item A language $L$ is in $\Delta_{n+1}^p$ if, and only if, there is some $A \in \Sigma_n^p$ such that $L \in
    \PT^A$. 
  \end{itemize}
\end{definition}
It is clear that $\Delta_n^p \subseteq \Sigma_n^p \cap \Pi_n^p$ for
all $n$, but equality is not known for any $n$.  In terms of the
existence of a logic, we know by Fagin's theorem~\cite{Fagin} that
there is a logic for $\NP$, and this is extended by~\cite{Stockmeyer}
to show that for each $n$, $\Sigma_n^p$ is captured by the $\Sigma_n$-fragment of
second-order logic.  Similarly, $\Pi_n^p$ is captured by the
$\Pi_n$-fragment.  We do not, however, obtain by these means a logic
for  $\Sigma_n^p\cap\Pi_n^p$.  To make this precise, we introduce here
a definition of what it would mean to capture these classes (in the
spirit of Definition~\ref{def:NPcNPcapture}).  Before doing
so, it is useful to recall that we have, for each $n$, a problem that
is complete for $\Sigma_n^p$ under polynomial-time reductions.  For
our purposes, it suffices to take one such problem, $\Sigma_n\QBF$.
This is the problem of deciding the truth of a quantified Boolean
formula in prenex form with $n-1$ alternations of quantifiers,
starting with an existential block.  By the fact that this problem is
$\Sigma_n^p$-complete, it follows that $\NP^{\Sigma_n\QBF} =
\Sigma_{n+1}^p$ for all $n$.  

\begin{definition}\label{def:PHcapture}
For any $n \geq 1$, a logic $\mathcal L$ captures  $\Sigma_{n+1}^p\cap\Pi_{n+1}^p$ if :
\begin{itemize}
\item There is a Turing machine $\mathcal C$, such that, on every
  input $\mathcal L$-sentence $\varphi$ of signature $\sigma$,
  $\mathcal C$ outputs a triple $(M,N,p)$ where $M$ and $N$ are
  non-determinisitic oracle Turing machines and $p$ is a polynomial such that :
\begin{itemize}
\item $\forall S\in \STRUC[\sigma], S\in \MOD[\varphi]$ if, and only
  if, there is a computation of $M$ with oracle $\Sigma_n\QBF$ of
  length at most $p(|S|)$ by which $M$ accepts $S$.
\item $\forall S\in \STRUC[\sigma], S\in \MOD[\varphi]$ if, and only
  if, all computations of $N$ with oracle $\Sigma_n\QBF$ on input $S$ of length at most $p(|S|)$
  lead to acceptance.
\end{itemize}
\item If $\mathcal P\subseteq \STRUC[\sigma]$ is an isomorphism-closed
  class of structures that belongs to  $\Sigma_{n+1}^p\cap\Pi_{n+1}^p$, then there exists an
  $\mathcal L$-sentence $\varphi$ of signature $\sigma$ such that
  $\MOD[\varphi] = \mathcal P$. 
\end{itemize}
\end{definition}

\section{Capturing intersection classes in the polynomial hierarchy}
The relationship between the existence of a logic for a complexity
class and the existence of complete problems can be somewhat subtle.
In the case of \emph{syntactic} complexity classes like $\PT$ and
$\NP$, there are complete problems under what we might call
\emph{computational} reductions, even reductions in very weak
computational classes such as $\mathsf{AC}^0$.  These classes have
complete problems under \emph{logical} reductions such as first-order reductions if, and only if, there
is a logic capturing them.  In the case of $\NP$, we simply know this
to be true, but for $\PT$ it remains an open question.  In the case of
$\NPcNP$, which is a \emph{semantic} class, Gurevich already showed that the existence of a logic
implies that the class has complete problems under polynomial-time
reductions (again, we can take computational reductions in much weaker
complexity classes).  It was noted in~\cite{Dawar} that this can be
strengthened to the existence of logical reductions.  In
Section~\ref{sec:intersection}, we prove this and extend it to all
intersection classes in the polynomial hierarchy.  

This result has an interesting consequence in connection with the
graph canonical labelling problem.  It is well known that if there is
a graph canonical labelling algorithm that runs in polynomial time,
then there is a logic for $\PT$
(see~\cite[Proposition~1.7]{Dawar-thesis}).  In the case of $\NPcNP$,
we are able to show that if canonical labelling can be done in this
class, a notion we make precise below, then the existence of a logic
becomes equivalent to the question of whether the class has complete
problems under polynomial-time reductions.  For intersection classes
higher up in the polynomial hierarchy, we know that canonical
labelling can be done in the class and therefore the equivalence holds
unconditionally.  This is shown in Section~\ref{sec:reductions}.

\subsection{Logics for Intersection  Classes}\label{sec:intersection}
The following strengthening of Gurevich's result showing that if
$\NPcNP$ admits a logic capturing it, it has a complete problem under
poly-time reductions was stated in~\cite[Theorem~4]{Dawar}.  

\begin{theorem}[\cite{Dawar}]\label{thm:Dawar} $\NPcNP$ has a complete
  problem under FO reductions if, and only if, it admits a
  logic.
\end{theorem}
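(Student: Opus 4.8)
The plan is to prove both implications, following the template of the corresponding argument for $\PT$ in~\cite{Dawar-jlc}, and treating the direction from a complete problem to a logic as the routine one and the direction from a logic to a complete problem as the substantive strengthening of Gurevich's argument from polynomial-time to first-order reductions.

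For the direction from a complete problem to a logic, suppose $C \subseteq \STRUC[\tau]$ is complete for $\NPcNP$ under first-order reductions. I would take the $\mathcal{L}$-sentences over a signature $\sigma$ to be (codes of) first-order interpretations $\Phi$ of $\tau$-structures in $\sigma$-structures, setting $\MOD[\Phi] = \{S \in \STRUC[\sigma] \mid \Phi(S) \in C\}$. The set $\SEN(\sigma)$ of such codes is recursive, and $\SAT(\sigma)$ is recursive since $\Phi(S)$ is computable and $C$ is decidable; isomorphism-closure of $\SAT(\sigma)$ follows because first-order interpretations preserve isomorphism and $C$ is isomorphism-closed. To exhibit the machine $\mathcal{C}$ of Definition~\ref{def:NPcNPcapture}, I would, given $\Phi$, output $(M,N,p)$ where both $M$ and $N$ first compute $\Phi(S)$ deterministically and then run, respectively, the $\NP$-machine and the $\cNP$-machine witnessing $C \in \NPcNP$; since the interpretation is computable in polynomial time and enlarges the structure only polynomially, a single polynomial $p$ bounds both, and the two machines agree because the witnesses for $C$ do. Finally, any isomorphism-closed $\mathcal{P} \in \NPcNP$ reduces to $C$ by some interpretation $\Phi_{\mathcal{P}}$, and then $\MOD[\Phi_{\mathcal{P}}] = \mathcal{P}$, so the logic captures $\NPcNP$.

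For the converse, suppose $\mathcal{L}$ captures $\NPcNP$ with witnessing machine $\mathcal{C}$. I would define a single universal problem $C$ whose instances encode a triple consisting of an $\mathcal{L}$-sentence $\varphi$, a structure $S$ of the corresponding signature, and a block of $t$ padding elements; such an instance lies in $C$ exactly when $\mathcal{C}$ halts on $\varphi$ within $t$ steps with output $(M,N,p)$, and $p(|S|) \le t$, and $S \in \MOD[\varphi]$. On an input of size $n$ the padding forces $t \le n$, so $\mathcal{C}(\varphi)$ can be simulated deterministically within the size bound, and the subsequent simulation of $M$ (nondeterministically, for the $\NP$ side) or of $N$ (for the $\cNP$ side) runs for at most $p(|S|) \le t \le n$ steps; ill-formed or under-padded instances are simply rejected by both machines. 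Because the logic guarantees that $M$ and $N$ agree on $\MOD[\varphi]$, the two simulations agree on every input, so $C \in \NPcNP$. Completeness follows by reducing an arbitrary isomorphism-closed $\mathcal{P} \in \NPcNP$: fixing a sentence $\varphi_{\mathcal{P}}$ with $\MOD[\varphi_{\mathcal{P}}] = \mathcal{P}$, the computation $\mathcal{C}(\varphi_{\mathcal{P}})$ halts in some constant number of steps and yields a fixed polynomial $p$, so the map $S \mapsto \langle \varphi_{\mathcal{P}}, S, 1^{t(|S|)}\rangle$, with $t(|S|)$ a fixed polynomial dominating both the halting time and $p(|S|)$, reduces $\mathcal{P}$ to $C$.

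The main obstacle, and the point where this argument goes beyond Gurevich's original reduction, is verifying that this last reduction can be realised as a \emph{first-order} reduction rather than merely a polynomial-time one. The map must hard-wire the fixed string $\varphi_{\mathcal{P}}$ into the output structure and adjoin a polynomial number of padding elements, and I would argue that both are first-order definable: the polynomial padding is obtained in the standard way by taking the output universe to consist of tuples over that of $S$, and the constant part $\varphi_{\mathcal{P}}$ is defined by first-order formulas indexing its bits against the available order, with the finitely many inputs too small to accommodate the encoding handled separately. Confirming that these syntactic manipulations stay within the admitted notion of first-order reduction---in particular that the polynomial bound $t(|S|)$ can be matched by an FO-definable universe---is the delicate step; once it is in place, the equivalence follows.
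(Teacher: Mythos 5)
Your overall architecture matches the paper's: the easy direction takes sentences to be first-order interpretations targeting the complete problem, and the hard direction builds a universal problem whose instances bundle a sentence code, the input structure, and padding certifying that the compiler $\mathcal{C}$ has halted and that the clock $p(|S|)$ fits within the instance size. (The paper actually proves this as the $n=1$ case of its Theorem~\ref{thm:PH-intersection}, with a cosmetic detour through graphs via Lemma~\ref{lem:hodges}; your direct version of the easy direction is fine.) Your argument that the universal problem is in $\NPcNP$ --- both simulations agree because every output of $\mathcal{C}$ is a pair of machines agreeing on $\MOD[\varphi]$, and ill-formed instances are rejected by both --- is also the paper's argument.

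The genuine gap is exactly at the step you flag as delicate, and your sketched repair does not work. You propose to hard-wire the fixed string $\varphi_{\mathcal{P}}$ ``by first-order formulas indexing its bits against the available order'' --- but there is no available order: the input is an arbitrary (unordered) graph, the reduction must be a first-order interpretation, and the target problem must be an isomorphism-closed class of structures, so nothing in the output may depend on a choice of linear order on the input universe. Writing a bit string at ordered positions is precisely the kind of operation that separates polynomial-time reductions on string encodings from first-order reductions on structures, and it is the one point where Gurevich's original argument needs a new idea. The paper's device is to avoid encoding $\varphi$ as a string altogether: its target vocabulary is $\tau=\langle V,E,\preceq,I\rangle$, and the sentence is represented only by its integer code $i$, realised as the condition that $I$ marks the $i$-th equivalence class of the linear preorder $\preceq$. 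In the reduction, a $k$-ary interpretation (with $k\geq i$, and $k$ also large enough that $n^k$ dominates both the halting time of $\mathcal{C}$ on $i$ and $p(n)$) defines $\preceq$ as an ordering of the equality types of $k$-tuples --- which is first-order definable and isomorphism-invariant, no order on the input needed --- and $\varphi_I$ as picking out the $i$-th type. Your tuple-based padding is the same as the paper's and is unproblematic; it is the isomorphism-invariant encoding of the sentence index that is missing from your proposal, and without it (or an equivalent device, such as fixing constantly many bits over the ordered equality-type classes) the claimed first-order reduction does not exist as described.
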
 

We generalize this theorem to higher levels of the polynomial
hierarchy as follows. 
\begin{theorem}\label{thm:PH-intersection}
There is a $\Sigma^p_n\cap\Pi^p_n$-complete problem under first-order
reductions if, and only if, there is a logic capturing $\Sigma^p_n\cap\Pi^p_n$.
\end{theorem}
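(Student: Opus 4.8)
The plan is to prove both directions of the equivalence, treating the statement as a generalization of the known $\PT$ result from~\cite{Dawar-jlc} and of Theorem~\ref{thm:Dawar}. Throughout I would fix the level $n$ and write $C = \Sigma_n^p \cap \Pi_n^p$, keeping in mind Definition~\ref{def:PHcapture}, where a witness to membership in $C$ is a triple $(M,N,p)$ with $M,N$ nondeterministic machines equipped with the oracle $\Sigma_n\QBF$, agreeing on an isomorphism-closed class of structures.

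First I would treat the easier direction: assume there is a logic $\mathcal L$ capturing $C$ and construct a $C$-complete problem under first-order reductions. The natural candidate is the ``satisfaction'' problem for $\mathcal L$: encode pairs $(S,\varphi)$ where $S$ is a structure and $\varphi$ an $\mathcal L$-sentence, and ask whether $S \in \MOD[\varphi]$. The key point is that this problem lies in $C$ itself. Here I would use the machine $\mathcal C$ from Definition~\ref{def:PHcapture}: on input $(S,\varphi)$, run $\mathcal C$ on $\varphi$ to obtain $(M,N,p)$, then simulate $M$ (for the $\Sigma_{n+1}^p$ upper bound) and $N$ (for the $\Pi_{n+1}^p$ upper bound), each using the $\Sigma_n\QBF$ oracle. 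Since $\NP^{\Sigma_n\QBF} = \Sigma_{n+1}^p$, this places the problem in $\Sigma_{n+1}^p \cap \Pi_{n+1}^p$. Completeness is then immediate: any isomorphism-closed class $\mathcal P \in C$ equals $\MOD[\varphi]$ for some fixed $\varphi$, so the map $S \mapsto (S,\varphi)$ is a first-order reduction from $\mathcal P$ to the satisfaction problem. I would need to verify that this map, including the padding of the fixed string $\varphi$, is expressible as a first-order interpretation — this is routine, following the corresponding argument for $\PT$.

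The harder direction assumes a $C$-complete problem $K$ under first-order reductions and builds a logic. The idea is to let $\mathcal L$-sentences be pairs $(I, c)$ where $I$ is (a description of) a first-order interpretation of $\GRAPH$-structures, or more precisely a first-order reduction to $K$, and $c$ is auxiliary data needed to pin down membership. The satisfaction relation would declare $S \in \MOD[(I,c)]$ iff $I(S) \in K$. The requirement that $\SEN(\sigma)$ be recursive is satisfied because first-order reductions form a recursive set, and isomorphism-invariance is automatic since first-order interpretations respect isomorphism. The crucial obstacle — and this is the main difficulty — is constructing, for each isomorphism-closed class $\mathcal P \in C$, an $\mathcal L$-sentence whose models are exactly $\mathcal P$, together with the machine $\mathcal C$ producing the witness triple $(M,N,p)$. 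Given $\mathcal P \in C$ with its own witness $(M_0,N_0,p_0)$, one first-order-reduces $\mathcal P$ to $K$; but to recover a uniform logic one must absorb the nonuniformity of ``which reduction'' into the syntax of the sentence, exactly as in the $\PT$ case, while simultaneously producing from the reduction a genuine pair of $\Sigma_n\QBF$-oracle machines that witness the $\Sigma_{n+1}^p \cap \Pi_{n+1}^p$ bound.

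The main obstacle I anticipate is the interplay between the \emph{two} semantic conditions noted after Definition~\ref{def:NPcNPcapture}: unlike the $\PT$ case, a valid $\mathcal L$-sentence must guarantee not only isomorphism-closure but also that the two machines $M$ and $N$ agree. I would handle this by exploiting that $K \in C$ comes with its \emph{own} pair of oracle machines certifying $K$ and its complement; composing these with the first-order reduction yields the pair $(M,N,p)$ for $\MOD[(I,c)]$, so the agreement of $M$ and $N$ is inherited from the agreement of the machines for $K$ rather than having to be enforced syntactically. This is precisely where the completeness of $K$ (ensuring a reduction exists) and the fact that $K \in C$ (ensuring the certifying machine pair exists) combine, and it is what makes the construction go through for intersection classes exactly as Theorem~\ref{thm:Dawar} does for $\NPcNP$, which is the special case $n+1 = 1$ with the trivial oracle.
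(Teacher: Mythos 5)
Your direction from a complete problem to a logic is essentially the paper's own argument: take sentences to be first-order interpretations (the paper routes them through $\GRAPH$ and the interpretation $I_\tau^{-1}$ of Lemma~\ref{lem:hodges}, but that is cosmetic), define satisfaction as membership of the image in the complete problem, and obtain the witness triple by composing the fixed witness $(M,N,p)$ for the complete problem with polynomial-time evaluation of the interpretation, so that agreement of the two machines is inherited rather than enforced syntactically. That part is sound. The genuine gap is in your other direction: the unpadded satisfaction problem $\{(S,\varphi)\mid S\in\MOD[\varphi]\}$ is \emph{not} in $\Sigma_n^p\cap\Pi_n^p$ by the argument you give. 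Definition~\ref{def:PHcapture} places no time bound whatsoever on the compiler $\mathcal C$, so running $\mathcal C$ on $\varphi$ may take time wildly exceeding any fixed polynomial in $|\varphi|$; moreover the polynomial $p$ that $\mathcal C$ outputs depends on $\varphi$, so simulating $M$ or $N$ for $p(|S|)$ steps is not bounded by any single polynomial in $|(S,\varphi)|$ uniformly over all sentences. Hence the decision procedure you describe is not a legitimate $\Sigma_{n+1}^p$- or $\Pi_{n+1}^p$-machine, and ``this places the problem in $\Sigma_{n+1}^p\cap\Pi_{n+1}^p$'' fails. You invoke padding only when checking that the map $S\mapsto(S,\varphi)$ is a first-order reduction, but the padding must be built into the \emph{definition of the complete problem itself}, not just into the reduction.

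This is exactly what the paper does: its problem $Q$ over $\tau=\langle V,E,\preceq,I\rangle$ encodes the sentence index $i$ isomorphism-invariantly (as the position of the $I$-marked class in the preorder $\preceq$) --- which, incidentally, also settles how to realise your pairs $(S,\varphi)$ as an isomorphism-closed class of structures, a point your sketch leaves implicit --- and, crucially, imposes the conditions that $\mathcal C$ on input $i$ halts within $|A|$ steps and that $|A|\ge p(|V|)$. These two clauses are what make membership of $Q$ in $\Sigma_n^p\cap\Pi_n^p$ go through: the padded universe absorbs both the compilation time of $\mathcal C$ and the clock $p$. The price is paid in the hardness proof, where one needs a $k$-ary interpretation with $k\ge i$, $n^k\ge t$ and $n^k\ge p(n)$ for all $n\ge n_0$ (small graphs handled by a finite case distinction), so that the image structure is large enough to satisfy the padding conditions. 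With the padding moved into the problem and the $k$-ary blow-up added to your reduction, your outline becomes the paper's proof; as written, the upper-bound step would fail.
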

\begin{proof}
In order to prove this result, we need the following lemma:
\begin{lemma}[{\cite[p.~228]{Hodges}}]\label{lem:hodges}
Let $\sigma$ be a finite relational vocabulary. Then, there exists first-order interpretations $I_\sigma : \STRUC[\sigma]\to \STRUC[\GRAPH]$ and  $I_\sigma^{-1}$ such that 
\[ \forall \mathcal A\in \STRUC[\sigma], I_\sigma^{-1}(I_\sigma(\mathcal A))\cong \mathcal A\]
Moreover, $\forall \mathcal A,\mathcal A'\in \STRUC[\sigma], \mathcal A\cong\mathcal A' \iff I_\sigma(\mathcal A)\cong I_\sigma(\mathcal A')$
\end{lemma}

We now use this to prove Theorem~\ref{thm:PH-intersection}.

\begin{itemize}
\item[$(\Rightarrow)$] Let $Q$ be a $\Sigma^p_n\cap\Pi^p_n$-complete problem under first-order
reductions and let $\tau$ be the vocabulary of $Q$, and let
$I_{\tau}^{-1}$ be the reduction from Graphs to $\tau$-structures
given by Lemma~\ref{lem:hodges}.  We define the following logic for
any signature $\sigma$ : 
\begin{itemize}
\item $\SEN(\sigma) = \{\Theta \mid \Theta \text{ is a first-order
    interpretation from }\sigma\text{ to } \GRAPH \}$
\item $\SAT(\sigma) = \{(S,\Theta) \mid I_\tau^{-1}(\Theta(S))\in Q\}$  
\end{itemize}
This logic obviously captures $\Sigma^p_n\cap\Pi^p_n$.  This can be
seen by taking a fixed $(M,N,p)$ that witnesses the membership of $Q$
in  $\Sigma^p_n\cap\Pi^p_n$.  Then, combining this with polynomial
time machines that compute the interpretations $\Theta$ and $I_\tau^{-1}$
gives a computable map that takes $\Theta \in \SEN(\sigma)$ to a
witness $(M_\Theta,N_\Theta,p_\Theta)$ for $\MOD(\Theta) \in
\Sigma^p_n\cap\Pi^p_n$.

\item[$(\Leftarrow)$] Let $\mathcal L$ be a logic for
  $\Sigma^p_n\cap\Pi^p_n$.  Assume we have an encoding of sentences in
  $\SEN(\GRAPH)$ as integers, and let $\mathcal I$ be the the range of
  this encoding.  Let $\mathcal C$ be a deterministic Turing Machine witnessing that $\mathcal L$ captures $\Sigma^p_n\cap\Pi^p_n$ (as in Definition~\ref{def:PHcapture}).

We aim to define a class $Q$ of structures complete for graph problems
in $\Sigma^p_n\cap\Pi^p_n$ over $\tau =\langle V,E,\preceq, I\rangle$
where $V$ and $I$ are unary and $E$ and $\preceq$ are binary relation
symbols.  A structure $\mathfrak A = \langle A,V,E,\preceq,I\rangle$ belongs to Q if : 
\begin{enumerate}
\item $\preceq$ is a total, transitive, reflexive relation, i.e.\ a linear pre-order. 
\item $\forall a,b, I(a)\land I(b)\implies a\preceq b\land b\preceq a$, and $i$ is the greatest integer such that $\exists x_1,x_2\dots x_i, x_1\precnsim x_2\precnsim \dots\precnsim x_i\land I(x_i)$, where $x\precnsim y \equiv (x\preceq y\land y\npreceq x)$. In other words, $I$ picks the $i$-th equivalence class in $\preceq$
\item $\mathcal C$ on input $i$ runs in time $t\le |A|$, and outputs $(M,N,p)$
\item $|A|\ge p(|V|)$
\item $M$ accepts $\langle V,E\rangle$
\end{enumerate}
 $Q$ is in $\Sigma^p_n\cap\Pi^p_n$ : 1, 2, 3 and 4 are clearly
 computable deterministically in polynomial time. As for 5. it is both
 in $\Sigma^p_n$, by checking that there is a computation of $M$ that accepts $\langle V,E\rangle$
 in $p(|V|)$ steps, and in $\Pi^p_n$, by checking that all computations
 of $N$ of length at most $p(|V|)$ accept $\langle V,E\rangle$.
 
 To show that $Q$ is $\Sigma^p_n\cap\Pi^p_n$-hard,  let $\mathcal P$
 be a class of graphs in $\Sigma^p_n\cap\Pi^p_n$.  Let $\varphi\in
 \SEN(\GRAPH)$ be an $\mathcal L$-sentence such that $\MOD[\varphi] =
 \mathcal P$.  Let $i \in \mathcal{I}$ be the encoding of $\varphi$,
 $t$ the length of the computation of $\mathcal C$ on input $i$ and
 $(M,N,p)$ the output of the computation.  Let $k$ and $n_0$ be
 integers such that $k \geq i$, $n^k \geq t$, $n^k \geq p(n)$ for all
 $n \geq n_0$.  We describe a $k$-ary first-order interpretation $\Theta :
 \STRUC[\GRAPH]\to \STRUC[\tau]$ which is a reduction from
 $\mathcal{P}$ to $Q$ for all graphs with at least $n_0$ vertices.
 The finitely many cases of graphs with fewer than $n_0$ vertices can
 be dealt with by adding a disjunct to the formulas mapping them to
 some fixed structures inside or outside $Q$ depending on whether or
 not they are in $\mathcal{P}$ in the standard way.  Our reduction is
 given by the tuple of formulas
 $(\varphi_0,\varphi_V,\varphi_E,\varphi_\preceq,\varphi_I)$ as follows.
 \begin{itemize}
 \item $\varphi_0 \equiv\mathbf{true}$
 \item $\varphi_V(x_1,\dots, x_k) \equiv x_1 = x_2 = \dots = x_k$
 \item $\varphi_E(x_1,\dots x_k,y_1,\dots,y_k) \equiv \varphi_V(x_1,\dots,x_k) \land \varphi_V(y_1,\dots,y_k) \land E(x_1,y_1)$
 \item $\varphi_{\preceq}$ defines an arbitrary ordering of basic
   equality types of $k$-tuples from $V$.  Note that the condition
   $k\geq i$ guarantees, in particular, that there are at least $i$
   such types.
 \item $\varphi_I$ defines the $i$th equality type. $$\begin{aligned}\varphi_I(\overline{a})\equiv \exists \overline{a_1},\dots,\overline{a_{i-1}}, &\bigwedge_{1\le j<i-1} (\varphi_\preceq(\overline{a_j},\overline{a_{j+1}})\land\lnot\varphi_\preceq(\overline{a_{j+1}},\overline{a_j}))\\
&\land \varphi_\preceq(\overline{a_{i-1}},\overline{a})\land\lnot\varphi_\preceq(\overline{a},\overline{a_{i-1}})\\
 &\land \forall \overline{b}, \varphi_\preceq(\overline{b},\overline{a_i})\implies\bigvee_{1\le j< i} (\varphi_\preceq(\overline{a_j},\overline{b})\land\varphi_\preceq(\overline{b},\overline{a_j}))\end{aligned}$$
 \end{itemize}
For any graph $G$, $I(G)\in Q$ if and only if $M$ accepts $(V,E)$, if and only if $(V,E)\models \varphi$, as conditions 1, 2, 3 and 4 result from definition.
\end{itemize}
\end{proof}

\subsection{Logical and Computational  Reductions}\label{sec:reductions}
Theorem~\ref{thm:PH-intersection} has an interesting consequence.  We
know that if canonical labelling of graphs can be done in polynomial
time, then there is a logic for $\PT$.  In the case of $\NPcNP$, if
canonical labelling is in the class, we still need the additional
condition that $\NPcNP$ is a syntactic class, i.e.\ it admits complete
problems under \emph{computational} (e.g.\ polynomial-time)
reductions.  Higher up in the polynomial hierarchy, for classes
$\Sigma_n^p \cap \Pi_n^p$ where $n \geq 2$, we know that canonical
labelling is, indeed, in the class.  There the existence of a logic
becomes equivalent to the question of whether there are complete
problems under polynomial-time reductions.  To make this precise, we
first need to define what it means for canonical labelling to be in
$\NPcNP$, or $\Sigma_n^p \cap \Pi_n^p$, which are classes of decision
problems. 

An \emph{ordered graph} is a structure $(V,E,\leq)$ where $(V,E)$ is a
graph and $\leq$ is a linear order on $V$.  A \emph{canonical
  labelling function} is a function $\Can$ taking ordered graphs to
ordered graphs such that
\begin{itemize}
\item if $\Can(V,E,\leq) = (V',E',\leq')$ then $(V,E) \cong (V',E')$;
  and 
\item if $(V,E) \cong (V',E')$ then for any linear orders $\leq$ and
  $\leq'$ on $V$ and $V'$ respectively, $\Can(V,E,\leq) \cong
  \Can(V',E',\leq')$. 
\end{itemize}

We say that a canonical labelling function is in $\FP$ (the class of
function problems computable in polynomial time) if it can be computed
by a deterministic Turing machine running in polynomial time.  To
define a corresponding notion for $\NPcNP$, we use the class $\TFNP$
defined by Megiddo and Papadimitriou~\cite{Megiddo}.
\begin{definition}
  We say that a canonical labelling function $\Can$ is in $\TFNP$ if
  the graph of the function, i.e.\ $\{ (X,Y) \mid \Can(X) = Y \}$ is
  in $\PT$.
\end{definition}
As noted by Megiddo and Papadimitriou~\cite{Megiddo}, $\TFNP$ (even
though it is not a class of functions) can be understood as the
function problems corresponding to $\NPcNP$.  This allows us to prove
the following result.

\begin{theorem}\label{thm:NPcNP-canon}
If $\NPcNP$ admits a complete problem under polynomial reductions, and
there is a canonical labelling function in $\TFNP$, then $\NPcNP$
admits a complete problem under first-order reductions. 
\end{theorem}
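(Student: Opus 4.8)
The plan is to reduce the statement to the construction of a logic for $\NPcNP$ and then invoke Theorem~\ref{thm:Dawar}. Concretely, fix a problem $K$ that is $\NPcNP$-complete under polynomial-time reductions, together with a fixed witness triple $(M_K,N_K,p_K)$ establishing $K\in\NPcNP$ as in Definition~\ref{def:NPcNPcapture}, and fix a canonical labelling function $\Can$ in $\TFNP$. (Working with graphs is no loss of generality: an arbitrary signature $\sigma$ can be handled by composing with the interpretations $I_\sigma,I_\sigma^{-1}$ of Lemma~\ref{lem:hodges}, so $\Can$ may be taken to act on $\sigma$-structures.) I would then define a logic $\mathcal L$ whose sentences over $\sigma$ are (codes of) polynomial-clocked deterministic machines $R$ --- intended as reductions from ordered $\sigma$-structures to $K$ --- and whose satisfaction relation is $(S,R)\in\SAT(\sigma) \iff R(\Can(S))\in K$, where $\Can(S)$ denotes the canonical labelling computed from the ordered encoding of $S$.

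First I would check that $\mathcal L$ is a logic. The set $\SEN(\sigma)$ of poly-clocked deterministic machines is recursive, and $\SAT(\sigma)$ is recursive because $\Can$ is total with graph in $\PT$ (so $\Can(S)$ is computable by an exhaustive, guaranteed-terminating search), $R$ is clocked, and $K$ is decidable. Satisfaction is isomorphism-invariant by the defining property of $\Can$: if $S\cong S'$ then the ordered graphs $\Can(S)$ and $\Can(S')$ are isomorphic and hence have identical string encodings, so $R(\Can(S))=R(\Can(S'))$. For the requirement that every isomorphism-closed $\mathcal P\in\NPcNP$ equals some $\MOD[\varphi]$, I would view $\mathcal P$ as a language $L_{\mathcal P}$ of encodings; since $\mathcal P$ is isomorphism-closed, $L_{\mathcal P}\in\NPcNP$, and completeness of $K$ yields a polynomial-time reduction $f$ with $w\in L_{\mathcal P}\iff f(w)\in K$ for all strings $w$. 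Taking $R$ to be a clocked machine for $f$ gives $\MOD[R]=\{S : f(\Can(S))\in K\}=\mathcal P$, because $\Can(S)$ encodes a structure isomorphic to $S$.

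It remains to produce the machine $\mathcal C$ of Definition~\ref{def:NPcNPcapture}, i.e.\ to compute from $R$ a witness triple $(M,N,p)$ for $\MOD[R]\in\NPcNP$; this is where the use of $\Can\in\TFNP$ is delicate and is the main obstacle. The existential machine $M$ on input $S$ nondeterministically guesses a candidate ordered graph $Y$, deterministically verifies $Y=\Can(S)$ in polynomial time (possible since the graph of $\Can$ is in $\PT$), rejecting the branch if the check fails, then computes $R(Y)$ and simulates $M_K$ on $R(Y)$; some branch accepts iff $R(\Can(S))\in K$, since totality of $\Can$ guarantees a correct guess exists and verifiability rules out spurious accepts. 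The universal machine $N$ is the subtle counterpart: it guesses $Y$, and on branches where $Y\neq\Can(S)$ it \emph{accepts} (so wrong guesses cannot spoil the universal condition), while on the unique correct branch it simulates $N_K$ on $R(Y)$; thus all computations of $N$ accept iff all computations of $N_K$ on $R(\Can(S))$ accept iff $R(\Can(S))\in K$. The opposite accept/reject convention for wrong guesses in $M$ versus $N$ is exactly what makes both the $\NP$ and $\cNP$ certifications go through simultaneously, and it is precisely the totality and polynomial-time verifiability packaged in the definition of $\TFNP$ that make this guess-and-verify scheme sound. With $\mathcal C$ in hand, $\mathcal L$ captures $\NPcNP$, and Theorem~\ref{thm:Dawar} then furnishes a complete problem under first-order reductions.
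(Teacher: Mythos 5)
Your proposal is correct and follows essentially the same route as the paper's proof: a logic whose sentences are clocked deterministic reduction machines, satisfaction defined by composing the canonical labelling (via the interpretations of Lemma~\ref{lem:hodges}) with the reduction and the complete problem, a $\TFNP$ guess-and-verify construction of the witness triple, and a final appeal to Theorem~\ref{thm:Dawar}. If anything, you are more explicit than the paper on the one delicate point, namely that the existential machine must reject and the universal machine must accept on branches where the guessed canonical form fails verification.
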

\begin{proof}\leavevmode
If $\Can$ is in $\TFNP$, there is a nondeterministic machine
$\mathcal{G}$ which, given a string encoding an ordered graph $G$,
runs in time polynomial in the size of $G$ and each computation of
$\mathcal{G}$ either ends in rejection or, produces on the output tape
an encoding of $\Can(G)$.  Indeed, the machine $\mathcal{G}$ can
nondeterministically guess a string for $\Can(G)$, then verify that
the guess is correct and write it on the output tape or reject if it
is not. 

Let $\mathcal P$ be an $\NPcNP$-complete problem on graphs under
polynomial reductions, and $(\mathcal M, \mathcal N, p)$ be a triple witnessing this membership.

Finally, let $(M_i,p_i)_{i\in\mathcal I}$ be an enumeration of pairs
where $M_i$ is a deterministic Turing  machine with output tape and
$p_i$ is a polynomial.  We write $f_i$ for the function on strings
computed by the machine $M_i$ when clocked with the polynomial $p_i$.

We can now construct the following logic $\mathcal L$ : 
\begin{itemize}
\item $\SEN(\sigma) = \mathcal I$
\item $\SAT(\sigma)$ is the set of all $(S,i), S\in \STRUC[\sigma],
  i\in\mathcal I$ such that $\mathcal M$ accepts
  $x = f_i(\Can(I_\sigma(S)))$ in $p(|x|)$ steps.
\end{itemize}
To see that this is a logic, i.e.\ that the satisfaction relation is
well defined, let $S$ and $S'$ be two isomorphic $\sigma$-structures.
By Lemma~\ref{lem:hodges},  $I_\sigma(S)\cong I_\sigma(S')$ and
therefore $\Can(I_\sigma(S)) = \Can(I_\sigma(S'))$.  Hence, 
\[\forall \varphi\in \SEN(\sigma), S\models \varphi\iff S'\models\varphi.\]

To see that this logic captures $\NPcNP$,  let $L$ be an $\NPcNP$
decidable class of structures of signature $\sigma$.  Then,
$I_\sigma(L)$ is an $\NPcNP$ problem (as
$I_\sigma^{-1}(I_\sigma(L))=L$), so there exists $i\in\mathcal I$ such
that $M_i$ computes a reduction from $I_\sigma(L)$ to $\mathcal P$ in
time bounded by $p_i$.  
Therefore, for all $S\in \STRUC[\sigma]$, $S\in L\iff
f_i(\Can(I_\sigma(S)))\in\mathcal P$.  In other words, there is $i\in
I$ such that $\MOD[i] = L$.

Finally, note that there is a computable translation that takes us
from $i$ to a witness $(M,N,p)$ to the fact that $\MOD[i]$ is in
$\NPcNP$.  Here $M$ is the nondeterministic machine that takes as
input a $\sigma$-structure $S$ and first computes $I_\sigma(S)$.  This
can be done deterministically in polynomial time.  It then runs the
non-deterministic machine $\mathcal{G}$.  Rejecting computations of
this lead to $M$ rejecting, but accepting computations produce
$\Can(I_\sigma(S))$ on which we now run $M_i$ for
$p_i(|\Can(I_\sigma(S))|)$ steps.  Finally we run $\mathcal{M}$ on the
result.  $N$ is defined similarly except that in the last stage we
run $\mathcal{N}$.  It can now be checked that this satisfies all the
conditions for a logic capturing $\NPcNP$.  Hence by
Theorem~\ref{thm:Dawar}, there is an $\NPcNP$-complete problem under
FO-reductions. 
\end{proof}

To lift the result to higher levels of the polyomial hierarchy, we
first define what it means for graph canonical labelling to be in the
functional variant of 
$\Sigma_n^p\cap\Pi_n^p$. 
\begin{definition}
  We say that a canonical labelling function $\Can$ is in $\mathsf{F}(\Sigma_n^p\cap\Pi_n^p)$ if
  the graph of the function, i.e.\ $\{ (X,Y) \mid \Can(X) = Y \}$ is
  in $\Delta_n^p$.
\end{definition}

We can now state the following equivalence.
\begin{theorem}\label{thm:PH-canon}
For $n\ge 2$. $\Sigma_n^p\cap\Pi_n^p$ admits a complete problem under
polynomial-time reductions if, and only if, it admits a complete
problem under first-order reductions.
\end{theorem}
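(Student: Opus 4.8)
The plan is to prove the forward direction trivially and to reduce the converse to Theorem~\ref{thm:NPcNP-canon}, whose extra hypothesis I expect to become automatic once $n\ge 2$. For the easy direction, every first-order reduction is computable in polynomial time (indeed in $\mathsf{AC}^0$), so a problem complete under first-order reductions is immediately complete under polynomial-time reductions; this needs no further argument. The whole content therefore lies in showing that polynomial-time completeness implies first-order completeness, and my strategy is to build a logic capturing $\Sigma_n^p\cap\Pi_n^p$ out of a polynomial-time-complete problem and then invoke the $(\Leftarrow)$ direction of Theorem~\ref{thm:PH-intersection}.

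The first and, I expect, the only genuinely new step is to verify that graph canonical labelling lies in $\mathsf{F}(\Sigma_n^p\cap\Pi_n^p)$ whenever $n\ge2$ --- this is exactly the hypothesis that had to be assumed in Theorem~\ref{thm:NPcNP-canon}. I would fix the canonical form sending an ordered graph to the lexicographically least adjacency matrix over all reorderings of its vertices. This $\Can$ is computable in $\FP^{\NP}$ by the standard prefix search: determining each successive bit of the least matrix amounts to asking ``is there a vertex ordering whose induced adjacency string extends by a $0$ the prefix fixed so far?'', a predicate in $\NP$, so one oracle query per bit computes $\Can$ in polynomially many deterministic steps. Consequently the graph $\{(X,Y)\mid \Can(X)=Y\}$ is decidable in $\PT^{\NP}=\Delta_2^p\subseteq\Delta_n^p$, giving the claim. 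For the rest of the argument I would use the sharper observation that, since $n-1\ge1$, the oracle $\Sigma_{n-1}\QBF$ is already $\NP$-hard, so a single deterministic oracle machine with oracle $\Sigma_{n-1}\QBF$ computes $\Can$ in polynomial time.

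With canonical labelling available as an oracle-deterministic function, the construction of Theorem~\ref{thm:NPcNP-canon} carries over almost verbatim. Taking $\mathcal P$ complete for $\Sigma_n^p\cap\Pi_n^p$ under polynomial-time reductions (witnessed by a pair of machines with oracle $\Sigma_{n-1}\QBF$) and an enumeration $(M_i,p_i)_{i\in\mathcal I}$ of clocked transducers computing candidate reductions $f_i$, I would define the logic $\mathcal L$ by $\SEN(\sigma)=\mathcal I$ and by putting $(S,i)\in\SAT(\sigma)$ exactly when the $\Sigma_n^p$-machine for $\mathcal P$ accepts $f_i(\Can(I_\sigma(S)))$. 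Isomorphism-invariance follows from Lemma~\ref{lem:hodges} together with $\Can(I_\sigma(S))=\Can(I_\sigma(S'))$ for $S\cong S'$, and the capturing property follows because every graph class $L$ in $\Sigma_n^p\cap\Pi_n^p$ satisfies $I_\sigma(L)\le_{\mathrm p}\mathcal P$ via some $f_i$. The only modification relative to the $\NPcNP$ case is in the witness triple $(M,N,p)$ required by Definition~\ref{def:PHcapture}: where that proof used a nondeterministic guess-and-verify machine to realise $\Can$, here $M$ first computes $I_\sigma(S)$, then $\Can(I_\sigma(S))$ via the oracle, then $f_i$, and finally runs the $\Sigma_n^p$-machine for $\mathcal P$, with $N$ identical except for running the $\Pi_n^p$-machine in the last phase. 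Applying Theorem~\ref{thm:PH-intersection} to $\mathcal L$ then yields a $\Sigma_n^p\cap\Pi_n^p$-complete problem under first-order reductions.

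The step I expect to require the most care is the bookkeeping that keeps each of $M$ and $N$ at the $n$-th level while using only the single oracle $\Sigma_{n-1}\QBF$. Because the deterministic computation of $\Can$ and the concluding nondeterministic machine both query only $\Sigma_{n-1}\QBF$, their composition is an $\NP^{\Sigma_{n-1}^p}=\Sigma_n^p$ (respectively $\Pi_n^p$) computation, so no level of the hierarchy is lost; this is precisely the point at which the $n\ge2$ hypothesis is used and at which the $\TFNP$ assumption of Theorem~\ref{thm:NPcNP-canon} becomes unnecessary. Verifying that the least-adjacency-matrix form is genuinely a canonical labelling function in the sense defined above, and that the prefix search remains polynomial, is routine.
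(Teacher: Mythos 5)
Your proposal is correct and follows essentially the same route as the paper: the paper's own proof is a one-line reduction to the argument of Theorem~\ref{thm:NPcNP-canon}, citing~\cite{BlassGurevich-equivalence} for the fact that a canonical labelling function exists in $\mathsf{F}(\Sigma_n^p\cap\Pi_n^p)$ when $n\ge 2$, and this is precisely the fact you instead establish directly, via the lex-least adjacency matrix computed by prefix search in $\FP^{\NP}$, so that the graph of $\Can$ lies in $\Delta_2^p\subseteq\Delta_n^p$. Your one refinement --- realising $\Can$ by a deterministic polynomial-time machine with oracle $\Sigma_{n-1}\QBF$ (using $n-1\ge 1$) instead of the guess-and-verify machine $\mathcal{G}$ of the $\TFNP$ argument --- is sound and in fact assembles the witness triples $(M,N,p)$ of Definition~\ref{def:PHcapture} slightly more cleanly than a verbatim transcription of the Theorem~\ref{thm:NPcNP-canon} proof would.
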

\begin{proof}
One implication is trivial.  For the other one, the proof is exactly
as for Theorem~\ref{thm:NPcNP-canon}, except we know that there is a
canonical labelling function in $\mathsf{F}(\Sigma_n^p\cap\Pi_n^p)$ (see~\cite{BlassGurevich-equivalence}).
\end{proof}

\section{A relativization of Gurevich's conjecture}\label{sec:relativization}
It is well-known that the conjecture of Gurevich that there is no
logic for $\PT$ implies the conjecture that $\PT$ is different from
$\NP$.  Here we show that there is a relativized world in which these
two conjectures are different, i.e.\ the first fails while the second
is true.
\begin{theorem}
There is an oracle $A$, such that there is a logic for $\PT^A$ and
$\PT^A \ne \NP^A$.
\end{theorem}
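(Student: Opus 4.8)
The plan is to build the oracle $A$ by a staged diagonalization that simultaneously achieves two goals: (i) encode into $A$ a canonical-labelling algorithm for graphs that runs in $\PT^A$, so that we obtain a logic for $\PT^A$ via the relativized version of \cite[Proposition~1.7]{Dawar-thesis}; and (ii) diagonalize against all polynomial-time oracle machines to force $\PT^A \ne \NP^A$. The separation of $\PT^A$ from $\NP^A$ is the standard Baker--Gill--Solovay idea: we plant into $A$, in a reserved region of the oracle, a sparse ``needle'' so that an $\NP^A$ machine can detect the existence of a string of a given length in that region by guessing it, while no polynomial-time deterministic machine can find it by querying. The real work is to interleave this with the encoding needed for the logic.

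Concretely, I would partition the oracle strings into two disjoint syntactic blocks by a tagging scheme (e.g.\ strings beginning with $0$ versus $1$). The first block is devoted to the $\PT$-vs-$\NP$ diagonalization and holds, at each length $n$ in a sparse set of lengths, either no string or exactly one string; the test language $L_A = \{0^n \mid A \text{ contains a string of length } n \text{ in the first block}\}$ is then in $\NP^A$ but, by the usual adversary argument, can be made to differ from the language of the $k$-th polynomial-time oracle machine $M_k$ at some length, so $L_A \notin \PT^A$. The second block is used to hard-wire a canonical labelling function: for each ordered graph $G$ I would reserve a family of oracle locations that spell out, bit by bit, the canonical form $\Can(G)$, arranged so that a deterministic polynomial-time machine with oracle $A$ can read off $\Can(G)$ by polynomially many queries (the length of $\Can(G)$ is polynomial in $|G|$, and one queries each output bit). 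Because these locations are in a separate block, freezing them to encode $\Can$ does not interfere with the diagonalization in the first block.

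The construction proceeds in stages, extending a finite partial oracle. At each stage I alternate between: a \emph{coding} step, in which I fix all oracle bits of the second block that encode $\Can(G)$ for the next graph $G$ in an enumeration (these bits are determined, not chosen adversarially, so they only ever freeze finitely many strings and never need revision); and a \emph{diagonalization} step against $M_k$, in which I choose a fresh length $n$ (larger than any length queried or frozen so far, and lying in the first block's reserved lengths), run $M_k$ on $0^n$ under the partial oracle, observe its polynomially many queries, and set the length-$n$ first-block membership to the opposite of $M_k$'s answer, reserving an unqueried string to realize that choice if needed. Keeping the two blocks and the length reservations disjoint guarantees the two requirements never conflict.

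The main obstacle, and the step that needs the most care, is ensuring that the canonical-labelling encoding really yields a \emph{logic} for $\PT^A$ and not merely a canonical form computable in $\PT^A$: one must check that the relativized version of the implication ``polynomial-time canonical labelling $\Rightarrow$ logic for $\PT$'' goes through with $A$ as oracle, i.e.\ that the machine reading $\Can$ from the oracle is itself polynomially clocked and that the resulting satisfaction relation is recursive and isomorphism-invariant (isomorphism-invariance following from the defining property of $\Can$). A secondary delicacy is the bookkeeping of length budgets: the coding step for graph $G$ consumes second-block strings up to length polynomial in $|G|$, so I must choose the diagonalization lengths $n$ in the first block growing fast enough that each $M_k$ is defeated at a length exceeding everything frozen by all coding steps performed so far, while still leaving infinitely many fresh lengths for every machine. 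Provided the enumeration of requirements is interleaved so that each $M_k$ is eventually handled and each graph is eventually coded, both requirement families are met and the theorem follows.
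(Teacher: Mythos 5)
Your proposal is correct, but it takes a genuinely different route from the paper. The paper does no explicit diagonalization at all: it takes off-the-shelf an oracle $B$ (from the cited construction of Torenvliet) with $\Delta_2^{P,B} \subsetneq \Sigma_2^{P,B}$, sets $A$ to be a $\Sigma_1^{P,B}$-complete set, and observes that $\PT^A = \Delta_2^{P,B} \subsetneq \Sigma_2^{P,B} = \NP^A$; the logic then comes for free because graph canonical labelling lies in (functional) $\Delta_2^p$ by Blass--Gurevich, hence $\Can \in \FP^A$, and one defines $\SEN(\sigma)$ to be an index set of clocked oracle machines and $\SAT$ by running $M_i$ with oracle $A$ on $\Can(I_\sigma(S))$ --- exactly the relativized form of the canonization-implies-logic argument you invoke. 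You instead build $A$ directly: a Baker--Gill--Solovay needle-hiding block to force $\PT^A \neq \NP^A$, plus a coding block hard-wiring the bit-graph of a fixed computable canonical form (say, the lexicographically least ordered copy) so that $\Can \in \FP^A$ by querying polynomially many bits. This works, and in fact is simpler than you present it: since your coding block is a \emph{fixed recursive set} (you rightly note its bits are determined, not adversarial), no interleaving of coding stages is needed --- your $A$ is just $C \oplus D$ where $C$ encodes $\Can$ and $D$ is a standard BGS diagonal set constructed relative to $C$; likewise the length-budget bookkeeping is vacuous once the blocks are tag-disjoint. You correctly identify the one step both proofs share and must verify, namely that canonization in $\FP^A$ relativizes to give a logic for $\PT^A$ (with $\SAT$ recursive in $A$, isomorphism-invariance from the defining property of $\Can$ together with the Hodges interpretation $I_\sigma$, and a computable map from sentences to witnesses $(M,p)$); your treatment of this is at least as explicit as the paper's. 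What each approach buys: the paper's argument is three lines modulo the literature and yields the stronger structural fact that $\PT^A$ and $\NP^A$ are exactly the $\Delta_2$ and $\Sigma_2$ levels of a relativized hierarchy; yours is self-contained and more elementary, replacing the comparatively delicate $\Delta_2$-versus-$\Sigma_2$ oracle construction by plain BGS plus the planted-canonization trick, and it is more flexible if one wants to add further requirements to the same oracle (in the spirit of the open question in the paper's conclusion about decoupling canonical labelling from the existence of a logic).
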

\begin{proof}
As constructed in \cite{torenvliet}, let $B$ be a set such that
$\Delta^{P,B}_2\subsetneq\Sigma^{P,B}_2$.  Then take $A$ to be a
$\Sigma^{P,B}_1$-complete set.  Then, $\PT^A$ =
$\Delta^{P,B}_2\subsetneq \Sigma^{P,B}_2 = \NP^A$.  

Moreover, since $\Delta^P_2\subset \PT^A$, there is a graph canonical
labelling function $\Can$ computable by a deterministic
polynomial-time machine with an oracle for $A$.  Let
$(M_i,p_i)_{i\in\mathcal I}$ be an enumeration of polynomial time
bounded oracle Turing Machines. We can now build a logic for $\PT^A$ : 
\begin{itemize}
\item $\SEN(\sigma) = \mathcal I$
\item $\SAT(\sigma) = \{(S,i), \Can(I_\sigma(S))\text{ is accepted by }M_i\text{ with oracle }A\} $.
\end{itemize}
\end{proof}

\section{Conclusion}
A logic capturing a complexity class requires us to find an effective
syntax for the machines that define the class \emph{and} are
isomorphism invariant.  For complexity classes that are inherently
syntactic, such as $\PT$ and $\NP$, this requirement can be met by
finding a suitable canonical labelling algorithm.  For other classes
which are inherently semantic, such as $\NPcNP$, the requirement
breaks down to finding a syntactic characterization (i.e.\ a complete
problem) in addition to a canoncial labelling algorithm.  This allows
us to explore these questions in relativized worlds.  One interesting
question to pursue would be whether the requirement for a canonical
labelling algorithm can itself be done away with in a relativized
world?  Could one devise an oracle with respect to which canonical
labelling is not in polynomial-time yet there is a logic for $\PT$?

\bibliographystyle{plain}
\bibliography{DahanDawar}
\end{document}